\documentclass[11pt]{article}
\usepackage{amsmath,amssymb,amsthm}
\usepackage{hyperref}
\usepackage{microtype}
\usepackage{booktabs}
\title{Proof-of-Continuity: A Temporal Model\\
for Authority Propagation in\\
Distributed Systems and AI Agents}
\author{Nicola Gallo}
\date{8 July 2026}

\newtheorem{definition}{Definition}
\newtheorem{lemma}{Lemma}
\newtheorem{theorem}{Theorem}
\newtheorem{corollary}{Corollary}

\begin{document}
\maketitle

\begin{abstract}
Proof-of-Possession authorization models derive authority from the possession
of artifacts such as tokens, credentials, or capabilities. This paper argues
that possession is insufficient for discrete execution chains, whether they
span multiple services or occur as separated steps within the same machine,
because it does not guarantee preservation of the causal relationship between the
origin of a request and the authority exercised at later steps. We introduce
Proof-of-Continuity, a minimal authority-propagation discipline for the
Provenance Identity Continuity (PIC) model, in which each execution step must be
causally linked to the previous step and may only propagate a non-expansive
subset of the authority received from the origin. It introduces Proof of
Relationship, a single-hop causal primitive whose transitive composition is
Proof-of-Continuity; these complement Proof-of-Possession rather than replace it.
Under this model, the confused deputy condition cannot be satisfied as valid
model behavior: any privilege exercised at a later step must already be present
in the origin authority context. This is directly relevant to distributed
systems and AI agents, where executors invoke tools and downstream services
while holding multiple authority sources, so that the same authority/causality
mismatch recurs across service boundaries. Under Proof-of-Continuity these
sources may be carried together but are never merged into a combined authority,
since each step is authorized only against the authority context of the lineage
that caused it.

This paper concerns authorization propagation rather than authentication:
identity and authentication mechanisms such as OIDC, verifiable credentials,
wallets, and workload identity remain complementary mechanisms for establishing
the origin, while
Proof-of-Continuity addresses how authority propagates after that origin exists.
\end{abstract}

\section{Model}

We formalize the PIC (Provenance Identity Continuity) Model as follows.

Let $P$ be a finite set of principals, $O$ a set of operations, and $R$ a set of resources.
Define a privilege as $(o, r) \in O \times R$.

Each principal $p$ has an associated privilege set:
\[
Priv(p) \subseteq O \times R.
\]

A principal is used here as an abstract permissioned entity: it may be a human
identity established through an identity provider, wallet, OIDC/OIDC4VC-style
flow~\cite{openidconnect2014,openid4vc2024}, or decentralized
identifier~\cite{didcore2022}; a workload or machine identity such as those
used in WIMSE/SPIFFE-style environments~\cite{wimse2024,spiffe2017}; a role; a
service account; or another authenticated permissioned entity. What matters for this model is
that a privilege set $Priv(p)$ is associated with it and that it can express an
intent within those privileges. When a permissioned entity $p$ expresses such an
intent, it selects a subset of $Priv(p)$; that subset becomes the origin
authority context $C_0 \subseteq Priv(p)$ for a new execution lineage. The
privileges in $Priv(p)$ may be expressed at the level of operations the principal
is entitled to \emph{cause}, not only those it performs directly; a downstream hop
may realize such an authority through a locally different operation vocabulary via
the policy translation $\mathcal{T}$ of Section~\ref{sec:poc}.

For example, suppose Bob holds the privileges $(\text{read},\text{Book})$ and
$(\text{write},\text{Book})$. When Bob expresses an intent he selects a subset ---
say only $(\text{write},\text{Book})$ --- and that subset becomes the origin
authority context $C_0$ of a new lineage. Two executions may both involve Bob and
the same selected privilege, yet need not be the same occurrence: each carries
authority propagated along its own lineage. One may be a valid continuation of an
intent Bob expressed within a lineage; another may be an unrelated execution
exercising the same apparent privilege outside the lineage that granted it. The identity and privilege coordinates may be identical; the
lineage coordinate distinguishes the two.

A request is a message $req$ passed between executors --- the entities that
perform execution steps --- and may contain a payload.

Execution is modeled as a causal chain of steps:
\[
\pi = \langle (s_0, C_0), (s_1, C_1), \dots, (s_n, C_n) \rangle
\]
where $s_0$ is the origin step and $C_i \subseteq O \times R$ is the authority
context carried at step $s_i$.
We use $s_i$ for the formal execution step at hop $i$; operationally, a hop is
the execution of one such step by an executor that receives, processes, and may
propagate a request. Each hop may be performed by a service, workload, function,
tool, or agent; the model requires continuity of authority, not propagation of
identity.

\begin{definition}[PIC Model]
A system enforces Provenance Identity Continuity if, for every execution chain
$\pi$, each transition preserves causal continuity and monotonic authority
restriction:
\[
C_{i+1} \subseteq C_i
\]
for all $i$, and every privilege exercised at hop $n$ is causally linked to the
origin via a verifiable chain.
\end{definition}

The single-hop causal link asserted at each transition is a \emph{Proof of
Relationship} ($PoR$): evidence that a step is a valid continuation of its
immediate predecessor within the same lineage. The PIC Model thus combines two
conditions --- causal linkage, witnessed hop-by-hop by $PoR$, and monotonic
restriction $C_{i+1}\subseteq C_i$ --- both introduced informally here and
defined formally in Section~\ref{sec:poc}, where continuity is built as the
transitive composition of these relationships. The theorem below states the
authority consequence of the second condition.

\begin{theorem}[PIC Safety]
If the PIC Model holds, no execution step can exercise a privilege not present
at the origin.
\end{theorem}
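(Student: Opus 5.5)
The plan is a straightforward induction along the execution chain, using only the monotonic restriction condition $C_{i+1}\subseteq C_i$ from the PIC Model definition. First I will make precise what ``exercise a privilege'' means. A step $s_k$ may exercise a privilege $(o,r)$ only if $(o,r)\in C_k$, the authority context carried at that step. This is the reading implicit in the model, since $C_k$ is by definition the authority available at hop $k$. With this convention the theorem reduces to the set-theoretic claim $C_k\subseteq C_0$ for every $k$ and every chain $\pi=\langle (s_0,C_0),\dots,(s_n,C_n)\rangle$.

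The core step is induction on $k$. The base case $k=0$ is trivial, since $C_0\subseteq C_0$. For the inductive step, assume $C_k\subseteq C_0$. The PIC Model gives $C_{k+1}\subseteq C_k$ for the transition from $s_k$ to $s_{k+1}$. Transitivity of $\subseteq$ then yields $C_{k+1}\subseteq C_0$. Hence every context in the chain is contained in the origin context, and any $(o,r)$ exercised at hop $k$ lies in $C_0$. As a remark, since $C_0\subseteq Priv(p)$ for the originating principal $p$, we also get $C_k\subseteq Priv(p)$. So no step can exceed the originator's privileges, which is the confused-deputy consequence mentioned in the abstract.

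The main obstacle is not the induction but justifying the scope of the quantifier ``no execution step.'' A step might conceivably act outside any chain, or appear in a chain whose first element is not a genuine origin. I would address this with the second clause of the PIC Model: every exercised privilege at hop $n$ must be causally linked to the origin via a verifiable chain. Any step that exercises a privilege therefore lies on some chain $\pi$ rooted at an origin step $s_0$ with context $C_0$, and the induction applies to that chain. If a step lies on several chains, for instance an executor holding multiple authority sources, the argument is applied per lineage. The step is authorized against the context of the lineage that caused it, so the privilege lies in that lineage's $C_0$. Making this ``per-lineage, no merging'' reading explicit is the point I expect to need the most care, because without it one could wrongly union contexts across lineages. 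I would state it as part of the convention fixed in the first step rather than derive it.
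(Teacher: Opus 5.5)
Your proposal is correct and follows the paper's own argument: the paper chains $C_{i+1}\subseteq C_i$ to obtain $C_n\subseteq C_0$, which your induction spells out. It also relies, without stating it, on your convention that a step exercises only privileges in its own context $C_k$. Your added discussion of the scope of ``no execution step'' and of per-lineage evaluation goes beyond what the paper's proof addresses, but it is consistent with the model.
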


\begin{proof}
By construction, $C_{i+1} \subseteq C_i$ for all $i$, so $C_n \subseteq C_0$.
Thus, any $(o,r) \in C_n$ must also be in $C_0$.
Therefore, no privilege can be gained beyond the origin authority context.
\end{proof}

\section{Threat Model and Scope}

We consider execution chains in which an origin request causes one or more later
execution steps. Executors may be buggy, confused, or adversarially influenced
through their inputs or their execution context. An executor may hold multiple authority sources --- ambient
privileges, delegated tokens, bearer credentials, or capabilities --- and may
attempt to use one unrelated to the request being processed.

The adversary may influence requests, payloads, and parameters passed across the
chain, and may cause an executor to select the wrong authority source. The
adversary cannot forge the relationship evidence required by $PoR$, nor make a
valid continuity checker accept a transition that violates monotonicity. As
capability and token models assume their credentials unforgeable, we assume the
relationship evidence and the authority-transition mechanism unforgeable;
compromise of the validator, the cryptographic root of trust, or the initial
authority bootstrap is out of scope.

\section{Confused Deputy}

\begin{definition}[Confused Deputy]\label{def:confused}
A confused deputy occurs when there exist principals $U$ (user) and $D$ (deputy)
and a privilege $(o,r) \in O \times R$ such that:
\begin{enumerate}
    \item $(o,r) \notin Priv(U)$,
    \item $(o,r) \in Priv(D)$,
    \item $U$ sends a request $req$ to $D$,
    \item as a consequence of $req$, $D$ executes $(o,r)$.
\end{enumerate}
\end{definition}

This definition does not depend on implementation details,
only on the mismatch of authority and causality.

\paragraph{Classical example.}
A FORTRAN compiler \texttt{FORT}, installed in the privileged directory
\texttt{SYSX}, holds ambient authority to write statistics to
\texttt{(SYSX)STAT}. A user invokes the compiler and provides
\texttt{(SYSX)BILL} as the debugging output filename. The compiler opens the
target for output using its own authority over \texttt{SYSX}, thereby
overwriting \texttt{(SYSX)BILL}. The user never possessed this privilege, but
the deputy exercised it as a consequence of the user's request~\cite{hardy1988confused}.

\section{Proof-of-Possession (PoP)}

A token $t$ represents a set of privileges:
\[
Priv(t) \subseteq O \times R.
\]

\paragraph{PoP Semantics.}
Possession implies usability:
if a principal holds $t$, it may exercise all $(o,r)\in Priv(t)$.

PoP systems constrain authority by artifact possession, but do not, by
themselves, require that the exercised authority be a causal continuation of
the authority context of the request.

\subsection{Sender-Constrained and Holder-Bound Tokens}

Sender-constrained and holder-bound tokens --- for example
DPoP~\cite{fett2023dpop} and Token Binding~\cite{popov2018token} --- prove
\emph{who} may present or use an artifact by binding it to a key or transport
channel. They do not, by themselves, prove that the authority exercised through
that artifact is a continuation of the current execution lineage. Restricting who
may present a token constrains transport, not causality: a legitimate holder can
still, at some step, exercise a possessed authority that lies outside the lineage
of the request it is processing. The binding answers holder authenticity; the
lineage question --- whether this use continues the authority context that caused
it --- remains open.

\subsection{Authority Mixing}

Assume:
\begin{align}
&(\text{write},r) \notin Priv(U) \tag{H1}\\
&(\text{write},r) \in Priv(D) \tag{H2}
\end{align}

Here $U$ is the user and $D$ is a deputy.

In a possession-based model, a deputy may hold multiple authority sources:
its own ambient privileges, delegated tokens, bearer credentials, or other
artifacts. The model authorizes an operation by checking whether some possessed
authority source grants the requested privilege.

\begin{definition}[Selection function]
Let an executor's held authority be a family $\{A_j\}$ of sources --- ambient
privileges, delegated tokens, and bound or sender-constrained artifacts. A
\emph{selection function} $\sigma$ chooses, for a requested $(o,r)$, some source
$A_j$ with $(o,r) \in A_j$, independently of the request lineage $\ell$ --- the
causal lineage of the request being processed, formalized in
Section~\ref{sec:poc}.
\end{definition}

\begin{theorem}[PoP admits authority mixing]
In a possession-based model, if an executor may select authority from any
artifact or ambient privilege it holds independently of the request authority
context, then the confused deputy condition is admissible in the model.
\end{theorem}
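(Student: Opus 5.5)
The theorem is existential: "admissible in the model" means some execution permitted by possession-based semantics satisfies all four clauses of Definition~\ref{def:confused}. So the plan is to exhibit one witness configuration rather than argue about all executions. I would use the hypotheses (H1) and (H2) from the Authority Mixing subsection as the fixed setup, so $(\text{write},r)\notin Priv(U)$ and $(\text{write},r)\in Priv(D)$. This already discharges clauses 1 and 2 of Definition~\ref{def:confused} with $(o,r)=(\text{write},r)$.

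Next I construct the executor's held authority family $\{A_j\}$ for $D$. It contains the ambient source $A_1 = Priv(D)$, together with whatever (possibly empty) delegated artifact $A_2$ came from $U$, where $A_2 \subseteq Priv(U)$. I let $U$ send a request $req$ to $D$ whose payload names $r$ as a target, as in the \texttt{(SYSX)BILL} example. This gives clause 3. Processing $req$, $D$ requests $(\text{write},r)$. By hypothesis the selection function $\sigma$ is independent of the lineage $\ell$, so it may return any $A_j$ containing $(\text{write},r)$. Since $A_2\subseteq Priv(U)$ does not contain it but $A_1$ does, $\sigma$ is permitted to return $A_1$. Under PoP semantics, possession implies usability: holding $A_1$ entitles $D$ to exercise every privilege in it. The operation is therefore authorized, and $D$ executes $(\text{write},r)$ as a consequence of $req$, which gives clause 4.

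Finally I would observe that no step of this execution is rejected by any check the model imposes. The only authorization condition is $\exists j:(o,r)\in A_j$, and it holds. Hence the confused deputy configuration is a valid run of the model, which is what "admissible" means. For contrast I would note that this run violates $C_{n}\subseteq C_0$: taking $C_0\subseteq Priv(U)$, the exercised privilege lies outside $C_0$. This links the result to the PIC Safety theorem without being needed for the proof.

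The main obstacle is clause 4's phrase ``as a consequence of $req$''. The argument must make precise that causation alone, with no authority flowing along the lineage, suffices in the PoP model. I would first handle this by taking ``consequence'' in the plain operational sense that $req$'s payload determines the target $r$. The independence of $\sigma$ from $\ell$ then blocks any objection that the model would tie the chosen source to $U$'s context. The key point is that the independence hypothesis is what licenses choosing $A_1$; without it, $\sigma$ might be forced to use $A_2$ and the construction would fail.
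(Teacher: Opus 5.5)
Your proposal is correct and follows the paper's argument. Like the paper, you exhibit a request whose authority context lacks $(o,r)$ and an executor holding another source that contains it, then observe that lineage-independent selection together with ``possession implies usability'' accepts the run. Your version is more explicit: it checks each clause of Definition~\ref{def:confused} and uses (H1), so clause 1 is established for $Priv(U)$ itself, whereas the paper's proof states the mismatch only for the request's authority context.
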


\begin{proof}
Let a request originate from an authority context that does not contain
$(o,r)$. Let the executor hold another authority context, artifact, or ambient
privilege that does contain $(o,r)$.

In a PoP model, authorization depends on the possession of an artifact or
privilege granting $(o,r)$, not on whether that authority is a continuation of
the request authority context. Therefore the model permits an execution in
which the request causally triggers $(o,r)$ while $(o,r)$ is absent from the
request authority context. This is the confused deputy condition of
Definition~\ref{def:confused}.
\end{proof}

This applies to any model in which authorization is derived from possession
independently of the causal lineage, including JWTs, bearer credentials, and
capabilities treated as held artifacts when they may be applied outside the
execution that requested them. Holder binding and sender-constraint restrict
which principal may present $A_j$, not whether $\sigma$'s choice lies in $\ell$;
they therefore do not remove this admissibility. The gap is not possession itself
but the absence of a temporal binding between the authority used and the execution
that caused its use.

These are execution-level conditions: authority belonging to one lineage is drawn
into another, producing the confused deputy. Proof-of-Continuity removes the
admissibility at the level of the model --- in a valid PIC execution the
confused-deputy condition cannot arise (Section~\ref{sec:cdimposs}), because
authority is defined only as a continuation of the causing lineage.

\section{Proof-of-Continuity (PoC / PIC)}
\label{sec:poc}

Execution is a causal chain:
\[
s_0 \rightarrow s_1 \rightarrow \dots \rightarrow s_n.
\]

Let $C_i \subseteq O \times R$ denote the authority context carried by execution
step $s_i$.

\begin{definition}[Proof of Relationship]
$PoR(s_i, s_{i+1})$ holds iff $s_{i+1}$ is a valid causal continuation of $s_i$
within the same execution lineage. $PoR$ is a \emph{relational}, single-hop
property: it binds one step to its immediate predecessor. The reach back to the
origin is not carried by any single $PoR$ but obtained by composing consecutive
relationships, as the definition of $PoC$ below makes precise.
\end{definition}

\begin{definition}[Valid transition]
A transition from $s_i$ to $s_{i+1}$ is valid iff it is both causally related
and non-expansive:
\[
Valid(s_i,C_i,s_{i+1},C_{i+1})
\Longleftrightarrow
PoR(s_i,s_{i+1}) \land C_{i+1}\subseteq C_i.
\]
\end{definition}

The first condition proves causal relationship.
The second condition proves monotonic authority restriction.
Thus each hop transfers only:
\[
C_{i+1} \subseteq C_i. \tag{C1}
\]

\begin{definition}[Proof of Continuity]
For a chain $\pi = \langle s_0, \dots, s_n \rangle$, $PoC(\pi)$ holds iff every
adjacent transition is valid:
\[
PoC(\pi) \Longleftrightarrow
\bigwedge_{i=0}^{n-1} \big( PoR(s_i, s_{i+1}) \land C_{i+1} \subseteq C_i \big).
\]
\end{definition}

Thus continuity is the transitive composition of relationships: $PoR$ proves a
single link, $PoC$ proves the whole lineage. Relationship is local; continuity
is global.

\paragraph{Monotone propagation along a causal chain.}
Two things propagate together. First, authority attenuates monotonically: write
$C_{i+1} \preceq C_i$ for authority \emph{attenuation} --- every authority at a
hop is a restriction of one held at the previous hop, never an expansion. When
the context is a set of atomic privileges this specializes to
$C_{i+1} \subseteq C_i$, the form used in our examples. Although the core model
represents authority contexts as sets of atomic privileges, we use $\preceq$ when
referring to the general attenuation order; in the set-valued case used
throughout the formal examples, this order is $\subseteq$. Second, the hops are
bound into a single causal lineage by $PoR$: continuity is a property not of a
hop in isolation but of the \emph{chain}. Formally the lineage is a logical
causal chain; in an implementation it need not be carried whole --- it may be
represented by cryptographic commitments, references, hashes, signatures, or
summarized proofs that allow each hop to prove its place in the lineage without
transporting the entire history.

\paragraph{Embedding execution in the authority model.}
Proof-of-Continuity is defined precisely for this setting: the successor step need
not be known, selected, instantiated, or provisioned when the predecessor
completes its step. Execution is a sequence of causal steps in time, not of
positions fixed by topology --- the upstream executor acts at some time $t$, while
the downstream executor may come into existence only at a later time $t' > t$.
Pre-binding authority to a concrete holder, key, or channel of an unknown
successor is therefore not always possible.

What must hold across the gap is not the identity of the successor, nor a policy
attached to it, but that its authority continues the context that caused it. A
credential, workload-identity assertion, service-account credential, or bound
token may authenticate the holder or bind an artifact to a key, workload, or
channel. Unless the authorization decision also reads lineage-sensitive evidence,
however, the same holder attributes may appear across distinct executions: they
establish who acts, not which causal chain produced this request. Re-deriving authority at $i+1$
from a fresh assertion about the successor, rather than from its relationship to
the causing step, reintroduces the individuation gap of
Section~\ref{sec:impossibility} --- the structural condition behind the confused
deputy (Definition~\ref{def:confused}). Continuity avoids this because it is not
self-assertable: defined only relative to the predecessor, it can be
\emph{demonstrated} as a continuation but not asserted at $i+1$ in isolation.

A predecessor may instead emit a continuation later consumed by some successor:
\[
(s_i,C_i)
\;\xrightarrow{\;\mathrm{emit}\;}
\text{unknown next hop}
\;\xrightarrow{\;PoR,\; C_{i+1}\subseteq C_i\;}
(s_{i+1},C_{i+1}).
\]
What persists across the gap is valid continuity, not possession by a pre-existing
holder: a causal relationship to the predecessor and a non-expansive authority
context. Proof of Relationship witnesses this as a property of the execution
rather than of a holder; composed transitively under monotone attenuation
($C_{i+1} \preceq C_i$), it is Proof-of-Continuity, the propagation invariant of
the PIC model.

\paragraph{Heterogeneous operation spaces.}
The notation assumes a common privilege universe $O \times R$. In heterogeneous
systems different hops use different vocabularies; let $O_i \times R_i$ be the
universe at hop $i$. A transition is then governed by a policy translation
\[
\mathcal{T}_{i \to i+1} : \mathcal{P}(O_i \times R_i) \to
\mathcal{P}(O_{i+1} \times R_{i+1}),
\]
and monotonicity becomes $C_{i+1} \subseteq \mathcal{T}_{i \to i+1}(C_i)$: the
successor authority is no greater than what the predecessor permits under the
local translation. We treat $\mathcal{T}$ as fixed policy input; synthesizing or
verifying such translations is out of scope. Heterogeneity here is semantic
rather than merely administrative: different hops may describe authority through
different operation/resource vocabularies even within a single trust perimeter.
In heterogeneous systems, safety is therefore relative to the soundness of the
chosen policy translation $\mathcal{T}_{i \to i+1}$; an overly permissive
translation is a policy error rather than a violation of the continuity invariant
itself.

\begin{lemma}[Continuity implies origin binding]
If $PoC(\pi)$ holds, then $s_n$ is causally linked to $s_0$ through a verifiable
relational chain and $C_n \subseteq C_0$.
\end{lemma}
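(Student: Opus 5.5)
The plan is to unfold the definition of $PoC(\pi)$ into its $n$ conjuncts and handle the two claims of the lemma, causal linkage and authority containment, by separate inductions on the chain length. By the definition of Proof of Continuity, $PoC(\pi)$ gives, for every $0 \le i < n$, both $PoR(s_i,s_{i+1})$ and $C_{i+1}\subseteq C_i$. So each adjacent transition is a valid transition in the sense of the definition of \emph{Valid transition}, and both conclusions follow from chaining these local facts.

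For authority containment, I will prove by induction on $k$ that $C_k \subseteq C_0$ for all $0\le k\le n$. The base case $k=0$ is trivial. For the inductive step, assume $C_k \subseteq C_0$ with $k<n$. The conjunct for index $k$ gives $C_{k+1}\subseteq C_k$, and transitivity of $\subseteq$ yields $C_{k+1}\subseteq C_0$. Taking $k=n$ gives $C_n\subseteq C_0$. This is the same argument as in the proof of Theorem~1 (PIC Safety), now obtained from $PoC$ rather than assumed directly. In the heterogeneous setting the same induction goes through with $\subseteq$ replaced by containment in the composed translation $\mathcal{T}_{n-1\to n}\circ\cdots\circ\mathcal{T}_{0\to 1}(C_0)$. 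This variant needs each $\mathcal{T}$ to be monotone with respect to $\subseteq$, which I would state as an assumption on the fixed policy input rather than prove.

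For causal linkage, I will define the relation ``$s_k$ is linked to $s_0$'' as the existence of a sequence $s_0,\dots,s_k$ in which every consecutive pair satisfies $PoR$. This relation is the reflexive--transitive closure of $PoR$ restricted to the chain. The induction then runs exactly as before: the empty chain links $s_0$ to itself, and appending the witness $PoR(s_k,s_{k+1})$ to a chain from $s_0$ to $s_k$ gives a chain from $s_0$ to $s_{k+1}$. Verifiability is inherited componentwise. The composed witness is the tuple of single-hop $PoR$ witnesses, and checking it amounts to checking each one, which the threat model treats as unforgeable and checkable.

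The main obstacle is not mathematical but definitional. ``Causally linked through a verifiable relational chain'' is not given a formal meaning independent of $PoR$. I would therefore first make explicit that it denotes the transitive closure of $PoR$ along $\pi$, as the remark ``continuity is the transitive composition of relationships'' suggests. Once that reading is fixed, both parts reduce to routine inductions using only transitivity of $\subseteq$ and of the closure relation.
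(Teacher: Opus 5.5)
Your proposal is correct and takes the same approach as the paper, whose proof likewise composes the single-hop $PoR$ links into a chain from $s_0$ to $s_n$ and chains the per-hop inclusions $C_n \subseteq \dots \subseteq C_0$; your explicit inductions, and your reading of the causal link as the reflexive--transitive closure of $PoR$ along $\pi$, just make precise what the paper states in two lines. The heterogeneous aside is not needed for the lemma as stated, but you are right that it would require each $\mathcal{T}_{i\to i+1}$ to be monotone with respect to $\subseteq$.
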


\begin{proof}
Each $PoR(s_i,s_{i+1})$ links consecutive steps; their composition links $s_n$
to $s_0$. Non-expansiveness at each hop gives $C_n \subseteq \dots \subseteq
C_0$.
\end{proof}

We state the scope explicitly: this paper models the continuity invariant; the
concrete (e.g.\ cryptographic) construction of $PoR$ is out of scope and is
realized by a companion enforcement architecture. As capability and token models
assume their credentials unforgeable, we assume the lineage is unforgeable.

\subsection{Authorization Rule}

The following rule is evaluated only for a valid continuity chain, i.e.\ a chain
in which every adjacent transition satisfies both $PoR(s_i,s_{i+1})$ and
$C_{i+1}\subseteq C_i$.

For the common privilege universe $O \times R$, $(o,r)$ is authorized at the
final hop iff:
\[
(o,r) \in \bigcap_{i=0}^{n} C_i.
\]

Since the chain is monotonic decreasing:
\[
C_n \subseteq C_{n-1} \subseteq \dots \subseteq C_0,
\]
we have:
\[
\bigcap_{i=0}^{n} C_i = C_n.
\]

Thus no hop may acquire authority not already present at the origin, and any
authority dropped at some hop $j$ is lost irreversibly: by monotonicity it cannot
reappear at any later hop.

\subsection{The Ontological Shift: Possession vs Continuity}

Possession- and capability-based models~\cite{dennis1966programming,miller2003capability}
answer a \emph{spatial}
question: does an actor hold authority over an object? Authority is a point in
$O \times R$. This model is atemporal: it has no axis on which to record the
execution that led to an action.

Proof-of-Relationship supplies the missing axis: it binds an execution step to
its causal predecessor. Proof-of-Continuity is built on it --- continuity is the
transitive composition of relationships --- and combines two conditions: monotone
restriction of privileges ($C_{i+1} \subseteq C_i$) and causal continuity, via
$PoR$, of the execution that carries them.

Authority therefore no longer lives in $O \times R$ alone but in
$O \times R \times \mathcal{L}$, where $\mathcal{L}$ is the causal lineage of the
execution. Concretely, $\mathcal{L}$ is the set of finite linear causal
lineages: a lineage is a sequence
$\ell = \langle s_0 \prec s_1 \prec \dots \prec s_n \rangle$, where $\prec$
denotes causal precedence, witnessed between adjacent steps by $PoR$, so each
event is an occurrence of a privilege within a particular lineage. Two actions identical in possession --- same principal, same token,
same privilege --- may be distinct in continuity, because they descend from
different causes. The same $(o,r)$ exercised as a valid continuation of an
authorizing request, and exercised by an executor drawing on an unrelated
authority it happens to hold, is indistinguishable under possession, yet the two
are distinct executions: one continues an authorizing lineage, the other
continues none.
Possession is the projection onto $O \times R$ that forgets the lineage axis.
Possession is the spatial component of authority; lineage is its temporal
component, and a possession-only model is therefore a projection of authority
that forgets time. This does not make possession-based models wrong; it makes
them incomplete for the multi-hop case, where authority continuation becomes a
temporal property. Thus the relevant event is not merely $(p,o,r)$, or identity
plus privilege, but an occurrence of a privilege within a lineage. When the same
principal and the same privilege occur in two different lineages, the identity
coordinate does not separate them; the lineage coordinate does.

In the projection language this is the \emph{temporal reading} of the confused
deputy of Definition~\ref{def:confused}: an action spatially valid but temporally invalid --- the
deputy holds $(o,r)$, yet $(o,r) \notin C_0$ for the request that caused it ---
a collision under the projection that forgets $\mathcal{L}$, two distinct
executions that possession cannot tell apart. Section~\ref{sec:openpassthrough}
gives the operational characterization (the \emph{open passthrough}), which
refines Definition~\ref{def:confused} with designation.

\paragraph{Relation to capabilities.}
Capabilities solve the spatial question and more: by fusing designation and
authority in the act of invocation, a capability binds authority to a single
causal step --- in our terms \emph{single-hop continuity}, a
Proof-of-Relationship --- so the classical confused deputy is ruled out, not
merely reduced. What a capability treated purely as a held artifact lacks is the
extension of that binding across a chain: applied outside the lineage that
requested it, it falls back to lineage-invariant possession.

Proof-of-Possession and Proof-of-Continuity are two authority-propagation
primitives. A capability invocation is a one-hop instance of the continuity
primitive, whereas a capability held and presented as an artifact rests on
possession; a capability system that carried continuity along the whole chain
would, in these terms, be PIC-compliant. PIC is thus a temporal refinement of
capability-style propagation, not a replacement --- a dimension possession alone
cannot encode. Both eliminate the confused deputy by construction rather than by
runtime vigilance, but over different spans. Within the object-capability model,
fusing designation and authority at the point of invocation removes the classical,
single-hop mismatch structurally: a deputy cannot apply authority to a target it
was not given authority for. The mismatch re-appears only when the capability
model is left --- under ambient authority, or when a capability is used as a held
artifact outside the occurrence that carried it, the capability-as-possession case
analysed above, where authority reverts to possession. PIC extends the same
by-construction guarantee across a multi-hop lineage: the confused deputy is not a
state a valid PIC execution can represent at any hop, since a privilege absent
from the origin authority context is not a valid later state at all.

\subsection{Projection: Lineage-Invariant Possession Cannot Individuate Occurrences}
\label{sec:impossibility}

We make the projection precise. An \emph{event} is an occurrence exercising a
privilege within a lineage,
\[
e = (o, r, \ell) \in E := O \times R \times \mathcal{L},
\]
and $p : E \to O \times R$, $p(o,r,\ell) = (o,r)$, is the projection that forgets
the lineage.

\begin{definition}[Possession policy]
An authorization policy $A : E \to \{0,1\}$ is \emph{possession-based} iff it is
invariant under lineage: it factors through $p$, i.e.\ there exists
$\bar{A} : O \times R \to \{0,1\}$ with $A = \bar{A} \circ p$. Equivalently,
$A(o,r,\ell) = A(o,r,\ell')$ for all $\ell, \ell' \in \mathcal{L}$: the verdict
does not depend on which lineage produced the event.
\end{definition}

\begin{theorem}[Lineage-invariant policies cannot individuate occurrences]
Let $e = (o,r,\ell)$ and $e' = (o,r,\ell')$ be two events with the same privilege
but distinct lineage, $p(e) = p(e')$ and $\ell \neq \ell'$. Then every
possession-based policy satisfies $A(e) = A(e')$.
\end{theorem}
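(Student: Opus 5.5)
The plan is to unfold the definition of a possession-based policy and then compute directly; no earlier theorem is needed beyond that definition. I would fix an arbitrary possession-based $A : E \to \{0,1\}$. By definition there is some $\bar{A} : O \times R \to \{0,1\}$ with $A = \bar{A} \circ p$. The hypotheses give two events $e = (o,r,\ell)$ and $e' = (o,r,\ell')$ with $p(e) = p(e') = (o,r)$. The computation is then the single chain
\[
A(e) = \bar{A}(p(e)) = \bar{A}(o,r) = \bar{A}(p(e')) = A(e').
\]
Since $A$ was arbitrary, the conclusion holds for every possession-based policy.

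Alternatively, I could use the equivalent formulation stated in the definition, namely $A(o,r,\ell) = A(o,r,\ell')$ for all $\ell,\ell'$. Instantiating it at the two given lineages yields the claim immediately. For completeness I would also note why the two formulations agree. The factorization implies invariance by the computation above. Conversely, invariance lets us define $\bar{A}(o,r) := A(o,r,\ell_0)$ for any fixed $\ell_0 \in \mathcal{L}$, and this is well defined because the value does not depend on $\ell_0$. This step needs $\mathcal{L}$ to be nonempty, which holds since it contains at least the trivial lineage $\langle s_0 \rangle$.

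I would remark that the hypothesis $\ell \neq \ell'$ plays no role in the derivation; it serves only to make the statement meaningful. The conclusion is that a possession-based policy cannot give different verdicts to two occurrences that differ only in lineage. In particular, take the authorizing continuation and the confused-deputy use of the same $(o,r)$. Such a policy must either admit both or reject both, and this is the temporal reading of Definition~\ref{def:confused}.

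There is no substantive obstacle here. The only point requiring care is the well-definedness of $\bar{A}$ in the converse direction of the equivalence, and that is settled by the nonemptiness of $\mathcal{L}$.
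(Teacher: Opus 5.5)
Your proof is correct and is essentially the paper's own argument: unfold the factorization $A = \bar{A}\circ p$ and use $p(e)=p(e')$ to get $A(e)=\bar{A}(p(e))=\bar{A}(p(e'))=A(e')$. The detour through the equivalent lineage-invariance formulation is not needed for this theorem. The nonemptiness remark is also unnecessary: if $\mathcal{L}=\emptyset$ then $E=\emptyset$, and any $\bar{A}$ factors $A$ vacuously.
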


\begin{proof}
$A = \bar{A} \circ p$ and $p(e) = p(e')$, hence
$A(e) = \bar{A}(p(e)) = \bar{A}(p(e')) = A(e')$.
\end{proof}

The confused deputy is an instance of a forbidden pair:
events sharing $(o,r)$ but differing in lineage, one of which must be authorized
and the other denied. One lineage originates from a context holding $(o,r)$; the
other exercises $(o,r)$ though it never continued a context that held it. By the
theorem, \emph{no} lineage-invariant possession-based policy can separate them.

\begin{theorem}[Any resolution reintroduces continuity]\label{thm:reintro}
A policy $A$ authorizes $e$ and denies $e'$, with $p(e) = p(e')$ and
$\ell \neq \ell'$, only if $A$ is not lineage-invariant; that is, $A$ must read
some function $g : \mathcal{L} \to X$ of the lineage with $g(\ell) \neq g(\ell')$.
\end{theorem}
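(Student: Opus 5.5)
The plan is to argue by contraposition from the preceding theorem (Lineage-invariant policies cannot individuate occurrences), then make the ``must read some function $g$'' clause explicit by a direct construction. Suppose $A$ authorizes $e=(o,r,\ell)$ and denies $e'=(o,r,\ell')$, with $p(e)=p(e')$ and $\ell\neq\ell'$. If $A$ were lineage-invariant, that theorem would give $A(e)=A(e')$. This contradicts $A(e)=1\neq 0=A(e')$, so $A$ does not factor through $p$. This settles the first half of the statement.

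For the second half, I will exhibit the required $g$ by currying $A$ in its lineage argument. Let $X:=\{0,1\}^{O\times R}$, the set of functions from privileges to verdicts, and define
\[
g(\lambda) := \big( (o'',r'') \mapsto A(o'',r'',\lambda) \big).
\]
Then $A(o'',r'',\lambda)=g(\lambda)(o'',r'')$, so $A$ is determined by $p$ together with $g$; in this sense $A$ ``reads'' $g$. Evaluating at $(o,r)$ gives $g(\ell)(o,r)=A(e)=1$ and $g(\ell')(o,r)=A(e')=0$, hence $g(\ell)\neq g(\ell')$.

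A more economical alternative is $g_{o,r}(\lambda):=A(o,r,\lambda)$ with $X=\{0,1\}$, which separates $\ell$ and $\ell'$ for the same reason. I would state both versions, and note that any $g$ factoring $A$ through $(p,g)$ must separate $\ell$ and $\ell'$. The reason is that if $g(\ell)=g(\ell')$, then $e$ and $e'$ would have identical $(p,g)$-images, and therefore identical verdicts.

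The main obstacle is interpretive rather than technical: ``$A$ must read some function of the lineage'' is informal. I would fix its meaning as ``$A$ factors through $(p,g):E\to (O\times R)\times X$'' and prove the necessity of $g(\ell)\neq g(\ell')$ under that reading, as in the last sentence of the previous paragraph. Finally, I would remark that calling such a $g$ ``continuity'' is the paper's interpretation, namely that $g$ is lineage-sensitive evidence. That identification is a modeling claim, not part of the formal content of the proof.
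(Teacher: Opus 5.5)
Your proposal is correct and follows the paper's proof: both argue that if $A$ factored through $p$, the preceding theorem would force $A(e)=A(e')$, contradicting $A(e)\neq A(e')$. The rest of your proposal makes precise what the paper leaves informal when it says ``some $g(\ell)$ distinguishes $e$ from $e'$'': the explicit currying construction of $g$, the reading of ``reads'' as factoring through $(p,g)$, and the observation that any such $g$ must separate $\ell$ and $\ell'$.
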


\begin{proof}
If $A$ were lineage-invariant it would factor through $p$ and give
$A(e) = A(e')$ by the previous theorem, contradicting $A(e) \neq A(e')$. Hence
$A$ depends on the lineage coordinate: some $g(\ell)$ distinguishes $e$ from
$e'$.
\end{proof}

Such a $g$ is exactly a continuity mechanism --- a proof-of-relationship extract
of the lineage. The mitigations deployed within possession-based systems
(nonces, idempotency keys, timestamps, audience and context binding) are
concrete instances of $g$: they resolve the confused deputy only by
reintroducing lineage information --- that is, continuity --- into the decision.
Whether the lineage is carried alongside the request or embedded inside the
artifact is immaterial: a token that encodes a nonce or a causal context makes
$A$ depend on $\ell$ and is therefore continuity-aware by definition, not
possession-based. The distinction is not where the information lives but whether
the decision reads it. The limitation is not an engineering failure of possession-based systems; it is
a modeling boundary of lineage-invariant authorization: by the first theorem it
is structurally unable to individuate occurrences, and every working fix is
lineage-sensitive --- continuity-aware --- rather than lineage-invariant.

\paragraph{Why the possession model is representative.}
In the abstraction considered here, a bearer token used without explicit context
or lineage binding is lineage-invariant: the decision reads the artifact and the
requested privilege, not the causal lineage of the use. A JWT, for instance,
asserts that ``the holder may read,'' not the causal chain that led to it. Such systems often work not because they
encode continuity explicitly, but because a perimeter supplies part of it
implicitly ---
mutual TLS, network boundaries, and transport-inferred delegation often stand in
for the missing lineage. Once the perimeter dissolves --- zero-trust deployments,
multi-organisation chains on the public internet, and autonomous AI agents
calling tools and services --- lineage-invariance becomes visible as a
vulnerability at the second hop. The definition of a possession policy above
therefore describes deployed systems, not a caricature of them. A trust perimeter
may hide the temporal gap by making authority and causality appear co-located,
but the model does not depend on crossing such a perimeter: the mismatch can
arise whenever authority is selected independently of the execution that caused
its use.

\subsection{A Trade-off Theorem for Authority Propagation}

The projection of Section~\ref{sec:impossibility} is the technical basis for a
trade-off among three properties an authority-propagation system may be asked to
satisfy at once.

\begin{definition}[Lineage-invariant authorization]
Authorization of an event $e = (o,r,\ell)$ depends only on the possessed
authority for $(o,r)$, not on the lineage $\ell$ that caused the event.
\end{definition}

\begin{definition}[Authority-mixing-capable delegation]
An executor may hold both a request-derived authority context $C_0$ and an
independent authority source $C_D$, with some $(o,r)$ such that
$(o,r) \notin C_0$ and $(o,r) \in C_D$.
\end{definition}

\begin{definition}[Confused-deputy safety]
No request whose origin context $C_0$ lacks $(o,r)$ can cause a later executor to
exercise $(o,r)$ using authority not derived from that request.
\end{definition}

\begin{theorem}[Possession--delegation--safety trade-off]
No authority-propagation system can simultaneously satisfy lineage-invariant
authorization, authority-mixing-capable delegation, and confused-deputy safety.
\end{theorem}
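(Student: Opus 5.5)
We argue by contradiction, reducing the statement to the projection theorem of Section~\ref{sec:impossibility}. Suppose a system satisfies all three properties at once. Authority-mixing-capable delegation gives an executor $D$ holding a request-derived context $C_0$ and an independent source $C_D$, together with a privilege $(o,r)$ such that $(o,r)\notin C_0$ and $(o,r)\in C_D$. From this we build two events that share the privilege $(o,r)$ but lie in different lineages. The first, $e=(o,r,\ell)$, is one in which $D$ exercises $(o,r)$ as a continuation of $C_D$'s own lineage $\ell$. For instance, $\ell$ is a lineage whose origin context contains $(o,r)$, or one that originates at $D$ acting on its own authority. The second, $e'=(o,r,\ell')$, is one in which $(o,r)$ is exercised by $D$ while processing a request whose lineage $\ell'$ has origin context $C_0$. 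Since the origin contexts differ, $\ell\neq\ell'$.

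Next we read off what the three properties demand of these events. Delegation only has content if the independent source can actually be used, so $(o,r)$ must be authorized on its own lineage: $A(e)=1$. Otherwise $C_D$ is vacuous and the mixing hypothesis reduces to $C_0$ alone. Confused-deputy safety requires $A(e')=0$. The reason is that $e'$ is caused by a request whose origin lacks $(o,r)$, and the only authority for $(o,r)$ that $D$ holds comes from $C_D$, which is not derived from that request. Lineage-invariant authorization says that $A$ factors through $p$. The theorem that lineage-invariant policies cannot individuate occurrences then gives $A(e)=A(e')$, which contradicts $1\neq 0$. Equivalently, Theorem~\ref{thm:reintro} shows that any $A$ separating $e$ from $e'$ must read some $g(\ell)\neq g(\ell')$, which violates lineage invariance.

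The main obstacle is the step $A(e)=1$: justifying that the independent source must be usable on some lineage. The definitions are informal, so we will state explicitly that authority-mixing-capable delegation includes the usability of $C_D$ for $(o,r)$. This is the possession semantics of the Proof-of-Possession section (``possession implies usability''), and its direct consequence is that some event $(o,r,\ell)$ is authorized. A second point also needs care: lineage-invariant authorization and the selection function $\sigma$ must be identified with the possession policy of Section~\ref{sec:impossibility}. Because $\sigma$ chooses a source independently of $\ell$, the verdict on $e'$ is the same as the verdict obtained by selecting $C_D$. By the theorem that PoP admits authority mixing, $e'$ is then admitted, and this is exactly the confused deputy condition of Definition~\ref{def:confused}.

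Finally, we note that each pair of properties is satisfiable, which shows the trade-off is tight. Dropping lineage invariance gives PIC, where the authorization rule reads $\bigcap_i C_i=C_n$. Dropping mixing leaves executors with only $C_0$, so possession coincides with continuity. Dropping safety gives plain PoP.
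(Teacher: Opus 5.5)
Your proposal is correct and follows the paper's own argument. You assume all three properties, use authority-mixing-capable delegation to obtain $D$, $C_0$, $C_D$ and $(o,r)$, and apply the projection theorem so that the request-caused use of $(o,r)$ gets the same verdict as a use on $C_D$'s own lineage, contradicting confused-deputy safety; your explicit witness event $e$ with $A(e)=1$, justified by possession-implies-usability, spells out what the paper compresses into ``lineage-invariant authorization reads that $D$ possesses $C_D$ granting $(o,r)$''.
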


\begin{proof}
Suppose all three hold. By authority-mixing-capable delegation there is an
executor $D$ with a request-derived context $C_0$ and an independent source
$C_D$, and an $(o,r)$ with $(o,r) \notin C_0$ and $(o,r) \in C_D$. Let a request
cause $D$ to process a target requiring $(o,r)$. Lineage-invariant authorization
reads that $D$ possesses $C_D$ granting $(o,r)$; but by the projection theorem of
Section~\ref{sec:impossibility}, events sharing $(o,r)$ yet differing in lineage
receive the same judgment, so the policy cannot condition on whether $C_D$
belongs to the lineage of this request. The system therefore admits an execution
in which the request causes $D$ to exercise $(o,r)$ while $(o,r) \notin C_0$,
violating confused-deputy safety --- a contradiction.
\end{proof}

The theorem is not an attack on possession: it shows that lineage-invariant
possession is insufficient for multi-hop authority propagation precisely when
authority mixing and confused-deputy safety are both required. Since the three
cannot hold together, a system must relinquish at least one. For systems that
require both practical authority mixing and confused-deputy safety, the viable
choice is to relinquish lineage-invariance --- to make authorization
lineage-sensitive. Forbidding authority-mixing-capable delegation would instead
bar executors from holding authority of their own alongside request-derived
authority --- impractical in real deployments, where a service legitimately
retains its own privileges. Accepting confused-deputy executions abandons the
safety property at issue. Proof-of-Continuity takes this
option: it does not remove possession but restores the missing temporal
coordinate, so that authorization can read whether each authority use is a valid
continuation of the context that caused it. The contribution is not to reject
possession but to show that possession is the spatial projection of a richer
authority relation; the missing coordinate is temporal lineage.

\begin{table}[ht]
\centering
\small
\begin{tabular}{@{}p{2.9cm}p{5.0cm}p{3.1cm}@{}}
\toprule
Property relinquished & Consequence & Outcome \\
\midrule
Authority-mixing delegation & executors may not hold authority independent of the request & rejected (impractical) \\
Confused-deputy safety & confused-deputy executions become admissible & rejected (unsafe) \\
Lineage-invariance & authorization becomes lineage-sensitive & adopted (Proof-of-Continuity) \\
\bottomrule
\end{tabular}
\caption{The three properties cannot hold together, so a system must relinquish
at least one. For systems that require practical authority mixing and
confused-deputy safety, the viable choice is to relinquish lineage-invariance ---
the option adopted by Proof-of-Continuity.}
\label{tab:tradeoff}
\end{table}

\begin{corollary}[Continuity restores confused-deputy safety]
Relinquishing lineage-invariance regains safety: under non-expansive propagation
$C_{i+1} \subseteq C_i$, so $C_k \subseteq C_0$, and an $(o,r) \notin C_0$ cannot
appear at any later hop. This is the impossibility established in
Section~\ref{sec:cdimposs}.
\end{corollary}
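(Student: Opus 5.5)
The plan is to prove the corollary as a short chaining argument: monotonicity gives $C_k \subseteq C_0$, and under the PIC authorization rule this rules out any privilege outside the origin. The only substantive work is connecting that set inclusion back to confused-deputy safety as defined above.

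\textbf{Step 1 (origin binding).} Fix a valid chain $\pi$ with $PoC(\pi)$. By the Lemma (Continuity implies origin binding), every hop $s_k$ is causally linked to $s_0$ through composed $PoR$ relationships. Repeating the argument of the PIC Safety theorem, induction on $k$ gives $C_k \subseteq C_0$. The base case $k=0$ is trivial. For the inductive step, $C_{k+1} \subseteq C_k \subseteq C_0$ by transitivity of $\subseteq$.

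\textbf{Step 2 (authorization reads only the lineage context).} The Authorization Rule authorizes $(o,r)$ at hop $k$ iff $(o,r) \in \bigcap_{i\le k} C_i = C_k$. Suppose $(o,r) \notin C_0$. Then Step 1 gives $(o,r) \notin C_k$, so the request's own lineage authorizes $(o,r)$ at no hop. This is the precise sense in which ``$(o,r)$ cannot appear at any later hop.''

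\textbf{Step 3 (excluding independent sources).} This is the point I expect to be the main obstacle, because the trade-off theorem assumes an executor can also hold $C_D \ni (o,r)$. I would invoke Theorem~\ref{thm:reintro} in the converse direction. The PIC policy is not lineage-invariant: it evaluates the event $e=(o,r,\ell)$ against the context $C_k$ carried by the lineage $\ell$ that caused it. Exercising $(o,r)$ by drawing on $C_D$ would therefore be a different event $(o,r,\ell')$, with $\ell'$ the lineage that granted $C_D$. Such an event is not a continuation of $\ell$, since no $PoR$ links it to $s_0$.

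Concretely, suppose a hop $s_{k+1}$ in $\ell$ exercised $(o,r)$ with $C_{k+1} \ni (o,r)$. Then the transition from $s_k$ would fail $C_{k+1} \subseteq C_k$, so $Valid$ fails and $PoC(\pi)$ fails. If instead the hop presents no $PoR$ to $\ell$, it is not a step of $\ell$ at all. The threat model's unforgeability assumption excludes a checker accepting either case. Hence no execution caused by the request exercises $(o,r)$ via non-derived authority, which is exactly confused-deputy safety.

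The care needed is to state that the selection function $\sigma$ is replaced by lineage-indexed selection. Once that is done, the remaining inclusions are routine, and the referenced impossibility section simply restates this conclusion against Definition~\ref{def:confused}.
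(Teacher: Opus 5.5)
Your proposal is correct and follows essentially the paper's route: the monotone chain giving $C_k\subseteq C_0$ (your Steps 1--2), plus the point from the proof of the theorem in Section~\ref{sec:cdimposs} that authority drawn from an independent source like $C_D$ has a distinct origin and so is not exercised within the request's lineage (your Step 3). One caution: the unforgeability assumption only excludes a forged $PoR$ or an accepted non-monotone transition, not an executor acting on $C_D$ under its own origin, so your closing sentence proves the in-model claim that no valid continuation of $\ell$ exercises $(o,r)$ --- which is what the paper asserts, and why it lists lineage origination as an open limitation --- rather than an unconditional causal guarantee.
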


\section{Safety Property}
\label{sec:openpassthrough}

\begin{definition}[Origin-bounded authority]
A model enforces origin-bounded authority if every executable privilege at hop
$n$ was originally granted at hop $0$.
\end{definition}

We refine Definition~\ref{def:confused} to separate legitimate delegation from
the confused deputy. The distinguishing variable is not the authority source but whether the
action's target is designated by the request and whether the requester was
entitled to bring it about.

\begin{definition}[Authority to cause]
The origin context $C_0$ records the authority the requester is entitled to
\emph{cause}, not only to perform directly. A later action belongs to the
requester's lineage iff it realizes an authority in $C_0$; otherwise it is the
deputy's own action, rooted in a distinct origin (the deputy itself) and
unconstrained by $C_0$.
\end{definition}

\begin{definition}[Open passthrough]
A deputy is an \emph{open passthrough} when, as a consequence of a request, it
applies authority to a target designated by that request for an action the
requester was not entitled to cause: $(o,r) \notin C_0$ yet $(o,r)$ executes
because of the request. This is exactly the confused deputy.
\end{definition}

These separate three cases. (i) A deputy's own housekeeping --- the compiler
writing its statistics file --- is rooted in the deputy as origin and is
unconstrained by $C_0$. (ii) Legitimate delegation --- a payment gateway moving
funds because the user was entitled to cause the payment --- realizes an
authority in $C_0$ and is a valid continuation, even though the user does not
directly hold the banking authority. (iii) The confused deputy --- writing a
user-designated file the user could not cause --- is an open passthrough, the one
case the model forbids. The relevant question is not whether the origin could
perform the downstream operation directly, but whether the origin was authorized
to \emph{cause} it within this lineage. When successive services use different
operation spaces, this is mediated by the policy translation $\mathcal{T}$
introduced above, whose synthesis and verification are left to future work.

\paragraph{Example.}
If $C_0 = \{(\text{convert},r)\}$, then $(\text{write},r)$ can never be authorized
within the same execution chain: $(\text{write},r)\notin C_0$ and
$C_{i+1}\subseteq C_i$ at every transition, so $(\text{write},r)\notin C_n$.

\section{Impossibility of the Confused Deputy}
\label{sec:cdimposs}

The confused deputy condition requires an authority mismatch:
a privilege absent from the authority context of the request is nevertheless
exercised as a consequence of that request.

This is the paper's second impossibility result. Section~\ref{sec:impossibility}
shows that possession \emph{cannot} individuate occurrences; here we show the dual: under
continuity the confused deputy is \emph{not a state the model can represent}.
The condition is unsatisfiable within any valid PIC execution --- an impossibility
by construction, at the level of valid model behavior.

\begin{theorem}[Confused deputy is impossible under PIC]
In any valid PIC execution --- one whose adjacent transitions satisfy $PoR$ and
monotonicity $C_{i+1}\subseteq C_i$, so that authority is origin-bounded
($C_n \subseteq C_0$) --- the confused deputy conditions cannot be jointly
satisfied as valid behavior within the model.
\end{theorem}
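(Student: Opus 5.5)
The argument is by contradiction, and it uses the refined reading of Definition~\ref{def:confused} given by the open passthrough of Section~\ref{sec:openpassthrough}. Suppose some valid PIC execution $\pi = \langle (s_0,C_0),\dots,(s_n,C_n)\rangle$ jointly realizes the confused deputy conditions. I would first translate each clause of Definition~\ref{def:confused} into the model's vocabulary:
\begin{itemize}
\item[(1)] $(o,r)\notin Priv(U)$ together with $C_0 \subseteq Priv(U)$ gives $(o,r)\notin C_0$.
\item[(3)] $U$ sending $req$ to $D$ makes $U$'s intent the origin step $s_0$ of the lineage $\ell$.
\item[(4)] ``as a consequence of $req$, $D$ executes $(o,r)$'' means that $D$ exercises $(o,r)$ at some hop $s_k$ that lies in $\ell$. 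Concretely, $s_k$ is linked to $s_0$ by the composed $PoR$ chain, and the exercise is authorized by the rule of the Authorization Rule subsection, so $(o,r) \in \bigcap_{i\le k} C_i = C_k$.
\end{itemize}
Clause (2), $(o,r)\in Priv(D)$, plays no role in the valid-model judgment. Pointing this out is part of the proof: the deputy's possession is exactly the coordinate the model refuses to read.

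Next I would apply the Lemma ``Continuity implies origin binding'' to the prefix $\langle s_0,\dots,s_k\rangle$. $PoC$ holds on this prefix because every adjacent transition of $\pi$ is valid. The lemma then gives $C_k \subseteq C_0$, equivalently by PIC Safety. Combining this with $(o,r)\in C_k$ yields $(o,r)\in C_0$, which contradicts clause (1). So clauses (1), (3) and (4) cannot hold together in a valid execution.

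The remaining case is that $D$ exercises $(o,r)$ using $C_D$ or another ambient source. Here I would argue that such an exercise is not a hop of $\ell$ at all. Either it is authorized against a different origin (the deputy as root, case (i) of Section~\ref{sec:openpassthrough}), and then it is not ``as a consequence of $req$'' in the model's sense. Or it is claimed as a continuation of $s_{k-1}$ with context $C_k \ni (o,r)$, and then $C_k \not\subseteq C_{k-1}$ (since $C_{k-1}\subseteq C_0$), so $Valid(s_{k-1},C_{k-1},s_k,C_k)$ fails. That contradicts the assumption that the execution is valid. Any attempt to attach the action to $\ell$ without a genuine $PoR$ is excluded by the unforgeability assumption of the threat model.

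The main obstacle is this second case, specifically the formal link between ``as a consequence of $req$'' and ``occurs at a hop of lineage $\ell$ and is authorized against $C_k$.'' Definition~\ref{def:confused} is phrased informally in terms of causation, while the model only recognizes causation witnessed by $PoR$. I would close the gap by making causation-in-the-model explicit: an action is a consequence of $req$ iff it is a step of $\ell$, meaning it is reachable from $s_0$ by valid transitions. I would stress that the theorem is about valid model behavior, so causal effects not witnessed by $PoR$ are by definition outside the model and handled by the threat-model assumptions. With that identification fixed, the rest is the two-line monotonicity argument above.
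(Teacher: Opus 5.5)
Your proposal is correct and follows the paper's own argument. Monotonicity along the $PoR$-linked lineage gives $C_k \subseteq C_0$, so $(o,r)\notin C_0$ rules out $(o,r)\in C_k$. Authority the deputy draws from outside the lineage is treated as rooted in a distinct origin, and so is not exercised as part of the request.

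Your explicit identification of ``as a consequence of $req$'' with ``is a step of $\ell$'' is the same bridge the paper's proof uses implicitly. The paper concedes the limits of that bridge under lineage origination in its Limitations section. Spelling it out is therefore a faithful elaboration, not a different route.
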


\begin{proof}
By $PoR$ the hops form a single lineage rooted at $s_0$, and by monotonicity
$C_k \subseteq C_0$ for all $k$; hence $(o,r)\notin C_0$ implies $(o,r)\notin C_k$.
Authority the deputy holds outside this lineage has a distinct origin
(Section~\ref{sec:openpassthrough}) and is not exercised as part of the request;
within the request's lineage the privilege mismatch a confused deputy requires is
therefore unsatisfiable.
\end{proof}

\paragraph{Interpretation.}
Authority in PIC is not something that a subject merely \emph{has};
it is a continuity property of the execution chain.
Since no hop may introduce or reinterpret authority absent at the origin,
a confused-deputy execution may still be attempted physically, but it cannot be
accepted as valid behavior within the PIC model.

\section{Discussion}

PoP systems intentionally derive authorization from artifact control, which is
often useful and necessary for local authorization decisions. In multi-hop
execution, however, artifact control alone does not determine whether a later use
is a valid continuation of the authority lineage that caused it. PIC/PoC systems
instead propagate only non-expansive subsets of authority along a causally
verified execution chain.

Proof of Possession proves artifact control.
Proof of Relationship proves execution relationship.
Proof of Continuity proves authority continuation.
Proof-of-Possession establishes control over an authority artifact at a point in
execution; it does not by itself establish that a later use is a valid
continuation of the authority lineage that caused it, which is the temporal
condition Proof-of-Continuity adds. Proof-of-Possession can serve as one
mechanism for constructing a Proof of Relationship --- possession of a
predecessor-issued continuation artifact can witness a single hop --- but it is
not the only one.

\paragraph{Authority is continuity.}
Under PIC, authority is not a static property an entity merely \emph{has} but a
continuity property of an execution: a privilege is authorized at a step iff it is
a non-expansive continuation of the context that caused it. In this sense
authority \emph{is} continuous --- it exists only as the ongoing continuation of
an origin, not as a possession detached from its lineage.

\paragraph{Design space.}
The temporal binding may be realized in two ways: carried \emph{inside} the
authority artifact itself, as attenuated and context-bound capabilities, or
maintained \emph{alongside} it as an explicit continuity proof. Either can be
continuity-aware: by Theorem~\ref{thm:reintro} what matters is not where the
binding lives but whether the authorization decision reads it. An artifact whose
binding is ignored --- presented outside the occurrence that carried it ---
reverts instead to lineage-invariant possession. PIC takes the latter position,
treating continuity as a first-class property of the execution chain; the two
approaches are complementary rather than exclusive.

The resulting security invariant is simple: no downstream service can perform,
as part of a valid execution chain, an operation that was not authorized at the
origin. This is a modeling refinement rather than a criticism of existing
mechanisms: possession, capabilities, bearer tokens, and proof-of-possession
bindings remain useful local authorization tools, while PoC makes explicit the
additional temporal condition needed when authority is propagated across an
execution lineage.

\paragraph{Authority as a distributed transaction.}
A multi-hop authority chain is a distributed transaction over authority rather
than over data. A possessed artifact enables a local authorization decision at the
point of use --- often necessary for deployability, performance, and loose
coupling --- but revocation, delegation state, scope changes, and lineage are
global authority facts. Local authorization and global authority consistency
diverge when these facts are not represented at the point of use. PoC addresses
this by carrying lineage explicitly and requiring non-expansive authority
propagation across the chain.

Continuity is therefore not identity propagation. Identity providers, wallets,
verifiable credentials~\cite{vcdm2022}, OIDC/OIDC4VC-style
flows~\cite{openidconnect2014,openid4vc2024}, and WIMSE/SPIFFE-style workload
identity~\cite{wimse2024,spiffe2017} remain important
for authentication, credential presentation, audit, accountability, and origin
bootstrap; but an identifier repeated at later hops does not by itself prove that
the authority being exercised belongs to the same execution lineage. A lineage
begins when a permissioned entity expresses an intent within its privileges,
creating an origin authority context $C_0$; subsequent hops need not carry or
reinterpret the origin identity, but must preserve a non-expansive authority
context causally linked to that origin.

\paragraph{Application to AI agents.}
For authority propagation, AI agents are a special case of autonomous distributed
execution; we name them because the temporal gap is most acute there. An agent
runs a chain of steps --- planning, tool invocations, downstream calls --- while
holding multiple authority sources at once: its own service credentials, delegated
user tokens, and per-tool scopes. The classical local-filename designation
reappears as an agent-influenced parameter --- a tenant id, document id, role identifier,
queue message, or request target. When the caller influences this target and the
agent satisfies it using authority from a different context, the result is the
confused deputy problem in distributed form. PIC binds each tool call and downstream request to
the causal lineage that triggered it, so authority applied outside the execution
that caused the action is not a valid continuation. An agent asked to summarize a document
carrying an indirect prompt injection --- instructions to delete files or
exfiltrate secrets --- may attempt those actions, but under PoC they are valid
only if the origin authority context holds the corresponding privileges: if it
authorizes only $(\text{summarize},d)$, then $(\text{delete},r)$ or
$(\text{exfiltrate},r)$ cannot be valid continuations.

\section*{Related Work}

The confused deputy problem was introduced by Hardy~\cite{hardy1988confused}.
Capability-based access control~\cite{dennis1966programming} and the analysis of
the confused deputy under capabilities~\cite{miller2003capability} address the spatial
dimension of authority --- whether an actor holds authority over an object. This
paper isolates the orthogonal temporal dimension: whether the authority used
belongs to the execution that caused the action. We take the confused deputy not
as the specific operating-system compiler scenario in which it was first
described, but as the general authority/causality mismatch that scenario
exemplifies, and recast it for discrete multi-step execution --- the setting of
distributed systems, service chains, and AI agents rather than a single local
invocation. Framed this way, the confused deputy is one instance of a class of
authority-propagation vulnerabilities that share the same mismatch; under
Proof-of-Continuity that mismatch cannot arise as valid model behavior, since a
privilege absent from the origin authority context cannot be exercised later in
the same lineage. The paper compares possession-based authority propagation with
provenance-continuity-based authority propagation on exactly this axis.

The closest prior work is history-based access control~\cite{abadi2003history},
where the authority available to a computation depends on the history of
what has executed. We share the premise that authority is a function of
execution rather than of static possession. Unlike a retrospective check over
past callers (as in stack inspection), PoC is a forward propagation discipline:
each hop must be causally related to its predecessor and must not expand the
authority it received, including across asynchronous and cross-boundary steps.
The contribution here is orthogonal
and more specific: we frame the dependence as a \emph{dimension} $\mathcal{L}$
orthogonal to the privilege $(o,r)$, prove that possession (any policy invariant
under $\mathcal{L}$) cannot individuate occurrences, and show that the confused
deputy is a collision under the projection that forgets
$\mathcal{L}$ --- so that possession-based mitigations that distinguish such
cases must become lineage-sensitive, hence continuity-aware. The
object-capability literature~\cite{miller2006robust} analyses the confused
deputy as a separation of designation from authority, which capabilities
reunify; this is the spatial resolution. Our result adds the temporal axis: in the abstraction of this paper, a capability
treated only as a portable held artifact, without occurrence or lineage binding,
falls into the lineage-invariant projection and cannot distinguish two
occurrences that differ only in lineage.

Provenance-based access control~\cite{park2012provenance} uses the recorded
provenance of objects as an input to authorization decisions; PIC instead makes
the causal lineage of the \emph{execution} itself the carrier of a non-expansive
authority context. Decentralized information-flow
control~\cite{myers1997decentralized} propagates labels on data under monotone
restriction; Proof-of-Continuity propagates authority along an execution lineage
under the same monotonicity discipline, but the propagated object differs ---
authority rather than information-flow labels.

\paragraph{Bearer tokens, macaroons, and proof-of-possession bindings.}
OAuth 2.0 includes bearer-token usage as a common authorization
pattern~\cite{hardt2012oauth,jones2012bearer}. A bearer token used without
explicit context or lineage binding supports a decision that reads the token and
$(o,r)$ but not the causal lineage of the use: it is lineage-invariant, and thus
subject to the first theorem. Macaroons~\cite{birgisson2014macaroons} add
\emph{caveats} that only attenuate and chain cryptographically --- precisely our
monotone attenuation $C_{i+1}\preceq C_i$ --- yet a macaroon remains a bearer
credential: absent a caveat binding the request context, it can be presented
outside the lineage that produced it. DPoP~\cite{fett2023dpop} and Token
Binding~\cite{popov2018token} bind a token to a key or transport channel; in the
terms of Theorem~\ref{thm:reintro} such bindings make the decision read
additional occurrence context, and are continuity-aware in a local sense. Each
reduces cross-context misuse to the extent that a caveat or binding makes the
decision read $\ell$; none by itself establishes global Proof-of-Continuity, since
the decision still need not read the full causal lineage from the origin across a
multi-hop chain. PIC makes that lineage dimension global and explicit.

The PIC model is developed independently in~\cite{gallo2025pic}, where Provenance
Identity Continuity is introduced as a general model for distributed execution
systems. The present paper extends an earlier version~\cite{gallo2025apm},
refining it with the relationship/continuity distinction and the spatial/temporal
analysis of authority.

\section{Limitations and Future Work}

This paper presents a minimal linear model of Proof-of-Continuity. Several of the
directions below are not defects of the model but properties to be developed
within the same continuity principle; we distinguish genuine scope limits from
natural extensions. First, revocation is not treated as a
retroactive operation on an already issued lineage: a practical system must
decide whether revocation blocks only future transitions, invalidates
outstanding continuations, or triggers lineage-wide cancellation. Second,
composition: the model is linear, and forks, joins, fan-out, fan-in, and
DAG-shaped executions are a natural generalization of $\mathcal{L}$ to a partial
order --- an extension of the continuity principle to composed lineages rather
than a limitation of it. Third, the construction of
$PoR$ is abstract: its concrete construction is an implementation concern
separate from the continuity model. The relationship evidence it requires can be
derived from mechanisms such as trusted execution environments, or
verifiable credentials, with implementation-dependent security and performance;
the model is agnostic to the choice.
Fourth, lineage validation is an implementation concern rather than a limitation
of the model: practical realizations need not revalidate an entire history at
every hop, and may use checkpoints, summarized proofs, compact encodings, or
accumulator-style constructions to keep validation cost bounded.
Fifth, lineage origination: the model constrains authority \emph{within} a lineage
but does not itself govern when an executor may legitimately open a new lineage
rooted in its own privileges. An adversarially influenced executor could
misclassify a request-caused action as self-originated, performing it outside the
victim's chain without expanding any $C_i$. Distinguishing request-caused actions
from executor-originated ones is a responsibility of the $PoR$ construction and the
enforcement architecture, left to future work.
Finally, heterogeneous systems require policy translations $\mathcal{T}$ between
local vocabularies, whose synthesis, verification, and governance we leave to
future work.

\section{Conclusion}

This paper introduced Proof-of-Continuity as a minimal temporal model for
authority propagation. The core distinction is between possession, which proves
control over an authority artifact; relationship, which proves causal linkage
between adjacent execution steps; and continuity, which proves that authority has
propagated without expansion across the whole lineage. The main result is that
lineage-invariant possession cannot distinguish occurrences that share the same
privilege but arise from different causes; conversely, under Proof-of-Continuity a
privilege absent from the origin authority context cannot appear later in the
same valid execution chain. The confused deputy condition is therefore not
accepted as valid model behavior --- it is precisely the authority/causality
mismatch that continuity rules out. More broadly, PoC reframes authority
propagation as a temporal property of an execution lineage rather than merely a
static property of artifact possession.

\paragraph{Note on the name PIC.}
In Provenance Identity Continuity, the term Identity refers to the security
identity or execution identity of the permissioned entity from which authority is
created or anchored at the origin. It does not mean that identity is propagated
as the authorization primitive at every hop. The model instead shifts the
authorization burden from repeatedly interpreting identity to verifying execution
continuity: a causal lineage carrying a non-expansive authority context.
Provenance denotes the causal origin and lineage of the execution, Identity
denotes the authenticated permissioned origin, and Continuity denotes the
preservation of authority across later hops without expansion. Identity and
identifiers therefore remain essential for authentication, credential
presentation, audit, accountability, and origin bootstrap; PoC concerns the
authorization continuity that must hold after that origin has been established.

\section*{Acknowledgments}

This paper follows an earlier version of the PIC Model made available on
Zenodo in December 2025~\cite{gallo2025pic}. The author thanks Alan H. Karp,
whose experience was valuable in stress-testing the concepts of
Proof-of-Continuity and Proof of Relationship. The concepts presented here are
the author's own, and this acknowledgment does not imply his endorsement of the
paper. The author also used automated language assistance tools for editorial
refinement, LaTeX formatting, and notation polishing. All conceptual
contributions, models, and proofs are the author's own.

\end{document}